\newcommand{\tabincell}[2]{\begin{tabular}{@{}#1@{}}#2\end{tabular}}
\renewcommand\arraystretch{0.69}
\newtheorem{definition}{Definition}[section]
\newtheorem{lemma}{Lemma}[section]
\newtheorem{theorem}{Theorem}[section]
\newtheorem{corollary}[theorem]{Corollary}
\newtheorem{example}{Example}[section]
\newtheorem{remark}{Remark}[section]
\begin{document}

\title{Near MDS and near quantum MDS codes via orthogonal arrays}
\author{Shanqi Pang*, Chaomeng Zhang, Mengqian Chen, Miaomiao Zhang\\
College of Mathematics and Information Science, \\
Henan Normal University, Xinxiang, 453007, China\\
* Correspondence: shanqipang@126.com}
\date{}
\maketitle
\noindent{\bf Abstract~---~}  Near MDS (NMDS) codes are closely related to interesting objects in finite geometry and have nice applications in combinatorics and cryptography. But there are many unsolved problems about construction of NMDS codes. In this paper, by using symmetrical orthogonal arrays (OAs), we construct a lot of NMDS, $m$-MDS and almost extremal NMDS codes. We establish a relation between asymmetrical OAs and quantum error correcting codes (QECCs) over mixed alphabets. Since quantum maximum distance separable (QMDS) codes over mixed alphabets with the dimension equal to one have not been found in all the literature so far, the definition of a near quantum maximum distance separable (NQMDS) code over mixed alphabets is proposed. By using asymmetrical OAs, we obtain many such codes.

\noindent{\bf Key words~---~} orthogonal array; NMDS; NQMDS code over mixed alphabets.

\section{Introduction}

In digital communication, due to various interferences, errors occur during the transmission of information, which requires that the information is encoded so that it has the ability to self-correct. MDS codes are a kind of error correcting code with good performance. However, since the parameters of an MDS code are limited by the size of the field, it is desirable to study codes nearly meeting the Singleton bound with more flexible parameters \cite{ZHengCLiXWang2022}. For a linear code $C=[n, k, d]_s$ define $S(C)=n-k+1-d$. If $S(C)=S(C^\perp)=m$, we call $C$ is $m$-MDS. Particularly, if $S(C)=1$, $C$ is \rm almost MDS (AMDS), and $S(C)=S(C^\perp)=1$, $C$ is \rm near MDS (NMDS) \cite{JSuiQYueXLiDHuang2022}. An AMDS code and a linear orthogonal array are equivalent \cite{MADeBoer1996}. Thus AMDS and NMDS codes are valuable and interesting as they have special geometric properties \cite{ZHengCLiXWang2022}. The first NMDS code was the $[11,6,5]$ ternary Golay code discovered in 1949 by Golay \cite{MJEGolay1949}, which has applications in group theory and combinatorics. Some recent progress on theory and applications of NMDS codes were made in \cite{QWangZHeng2021,HTong2013,LJin2019,CDing2020,CDing2018,PTan2021,YZhou2009,RDodunekova1997,XKai2015,ZHengCLiXWang2022}. In \cite{CDing2020}, Ding and Tang constructed infinite families of NMDS codes which hold $t$-designs, $t=2,3,4$. Ding also constructed $t$-designs from some geometry codes containing AMDS ones \cite{CDing2018}. In \cite{PTan2021}, several families of NMDS codes which are both distance-optimal and dimension-optimal locally recoverable codes were studied. In \cite{YZhou2009}, the authors used NMDS codes to construct secret sharing schemes which have good security properties. The error detection capability of AMDS and NMDS codes was studied in \cite{RDodunekova1997} and conditions for the codes to be good for error detection were established. In \cite{XKai2015}, the authors constructed MDS symbol-pair codes from AMDS codes. In \cite{ZHengCLiXWang2022}, based on cyclic subgroups of $F_{q^2}^\ast$, the authors constructed MDS, NMDS and AMDS codes. There are still a lot of NMDS codes remaining unknown.

In particular, NMDS codes with parameters $[2q+k,k+1,2q-1]$ over $GF(q)$ are said to be almost extremal. Almost extremal NMDS codes with $k>q$ are all known. But the existence and construction of $k\leq q$ are still open \cite{QWangZHeng2021}. 

In this paper, we explicitly construct almost extremal NMDS codes through OAs such as $[6,3,3]$ NMDS code over $GF(2)$, $[8,3,5]$ NMDS code over $GF(3)$.

As in the classical transmission of data, it is inevitable that errors occur in quantum information processing \cite{PWShor1995}. Since QECCs \cite{PWShor1995,PWShor1996,PWShor1997} could fight against various quantum noises, it has been attracting a great deal of attentions \cite{PWShor1995,RLaflamme1996}. Numerous classes of QMDS codes over a single alphabet have been constructed mainly from Galois field, Euclidean construction, Hermitian construction and OAs \cite{HChen2005,GCohen1999,KFeng2002,MGrasslMRötteler2004,DHu2008,LJin2010,LJin20102,GGL2011,ZLi2008,RLi2010,AMSteane1999,AMSteane19992,PKSarvepalli2005,   XKai2013,BChen2015,SLi2016,WFang2019,RYan2023,HXu2022,FYang2023}. However, it is expected that we frequently face a more complicated situation that quantum resources, in which quantum information is encoded, have different dimensions. In particular, we often use hybrid systems \cite{Ati2007,Ggen2008,Pbl2010} with different dimensions to store, transmit, and process the quantum information. Thus it is quite necessary to generalize QECCs over a single alphabet to mixed alphabets \cite{ZWang2013}. Construction of such codes has become one of the most important tasks in quantum coding theory \cite{ZWang2013, PWShor1996}. The QECCs $((n, K, d))_{s_1,s_2,\ldots,s_n}$ over mixed alphabets have been studied in \cite{ZWang2013,FShi2021}, and the quantum Singleton bound is also generalized. In \cite{RYan2023}, Yan et al. obtained some QECCs over mixed alphabets based on OAs. However, other than the above papers \cite{ZWang2013, PWShor1996,FShi2021,RYan2023}, there are little work on QECCs over mixed alphabets particularly on QMDS codes because of their harder construction. Besides, in all the literature, QMDS codes over mixed alphabets are all for $K>1$ \cite{RYan2023,ZWang2013} while such codes for $K=1$ have not been found so far.

An orthogonal array ${\rm OA}(N,n,s_1^{n_1}s_2^{n_2}\ldots s_v^{n_v},k)$ of strength $k$ is an $N\times n$ matrix, having $n_i$ columns with $s_i$ levels, $i=1,2,\ldots,v$, $v$ is an integer, $n=\sum\limits_{i=1}^vn_i$, and $s_i\neq s_j$ for $i\neq j$, with the property that, in any $N\times k$ submatrix, all possible combinations of $k$ symbols appear equally often as a row. The orthogonal array is called a mixed orthogonal array if $v\geq2$. Otherwise, the array is called symmetrical. OAs play a prominent role in the design of experiments which were introduced by Rao \cite{CRRao1947, RYan2023}. As is often the case, they can be useful for quantum information theory. In recent years, many new classes of OAs, especially high strength OAs have been obtained \cite{SPang2021,LChen2017,YWang2016,YZhang2004,XZhang2020,YZhu2005}. The relationship among OAs, classical error correcting codes (CECCs), quantum uniform states and QECCs was further revealed \cite{DGoyeneche2016,ZZheng2021,HXu2022,DGoyeneche2014,SPang2019,XZhang2021,PDelsarte1973,RYan2023}. An ${\rm OA}(N,n,s_1^{n_1},\ldots,s_v^{n_v},k)$ having $n=n_1+\cdots+n_v$ columns is called an irredundant orthogonal array (IrOA), if every subset of $n-k$ columns contains a different sequence of $n-k$ symbols in every row \cite{DGoyeneche2014}. IrOAs play an important role in the construction of quantum uniform states and QECCs. A lot of symmetrical or asymmetrical IrOAs, quantum uniform states and QECCs over a single alphabet or mixed alphabets including QMDS codes have also been constructed \cite{XKaiSZhuPLi2014,RYan2023,HXu2022,FYang2023}. It is these new developments in OAs that suggest the possibility of constructing NMDS, almost extremal NMDS and NQMDS codes.

In this paper, we present sufficient and necessary conditions for a symmetrical OA to be an NMDS or $m$-MDS code. Then we construct a lot of NMDS codes including almost extremal NMDS codes and $m$-MDS codes. Further, we establish a relation between asymmetrical OAs and QECCs over mixed alphabets. In addition, a near quantum MDS (NQMDS) code is defined. From an ${\rm OA}(s^k,2k+1,s^{2k}2^1,k)$ for even $s$, we can construct an {\rm NQMDS} code $((2k+1,1,k+1))_{s^{2k}2^1}$ such as $((3,1,2))_{8^22^1}$, $((3,1,2))_{12^22^1}$, $((5,1,3))_{16^42^1}$, $((5,1,3))_{20^42^1}$.

The rest of this paper is organized as follows. In Section \ref{Preli}, we introduce some basic notations and useful results on OAs, CECCs, QECCs and NQMDS codes. Main results are given in Section \ref{Main}. In Section \ref{mMDS}, we present sufficient and necessary conditions for a symmetrical OA to be an NMDS or $m$-MDS code. And then we construct a lot of NMDS codes including almost extremal NMDS codes and $m$-MDS codes. In Section \ref{3.3}, we construct NQMDS codes over mixed alphabets through asymmetrical OAs. The paper is concluded in Section \ref{Conclu}.

\section{Preliminaries}\label{Preli}

First, the notations used in this paper are listed as follows.

Let $Z_s^n$ denote the $n$-dimensional space over a ring $Z_s=\{0, 1, \ldots, s-1\}$. When $s$ is a prime power, let $F_s$ be a Galois field containing $s$ elements with binary operations (+ and $\cdot$). If $A=(a_{ij})_{n\times m}$ and $B=(b_{uv})_{s\times t}$ with elements from a Galois field, the Kronecker sum $A\oplus B$ is defined as $A\oplus B=(a_{ij}+B)_{ns\times mt}$ where $a_{ij}+B$ represents the $s\times t$ matrix with entries $a_{ij}+b_{uv} (1\leq u\leq s, 1\leq v\leq t)$ and the Kronecker product $A\otimes B$ is defined as $A\otimes B=(a_{ij}\cdot B)_{ns\times mt}$ where $a_{ij}\cdot B$ represents the $s\times t$ matrix with entries $a_{ij}\cdot b_{uv}$ $(1\leq u\leq s,\ 1\leq v\leq t)$. Let $(\mathbb C^s)^{\otimes n}=\underbrace{{\mathbb C}^s\otimes  {\mathbb C}^s\otimes \cdots \otimes {\mathbb C}^s}\limits_{n}$.

Some basic knowledge about ${\rm OA}$, CECC and QECC is given.

\begin{definition}\cite{SPang2017}\label{minimal distance} Let $R_1,\ldots,R_N$ be the rows of an $N\times t$ matrix $A$, with entries at the $i$th column from $Z_{s_i}=\{0,1,\ldots,s_i-1\}$, where $s_i\geq 2$ and $i=1,2,\ldots,t$. The
Hamming distance $Hd(R_u, R_v)$ between $R_u = (a_{u1},\ldots, a_{ut})$ and $R_v = (a_{v1}, \ldots, a_{vt})$ is defined as follows:
$$Hd(R_u, R_v)=|\{r : 1\leq r\leq t, a_{ur}\neq a_{vr}\}|.$$
\end{definition}

In this paper, $md(L)$ denotes the minimum Hamming distance between two distinct rows of
an ${\rm OA}\ L$.

\begin{definition}\cite{SPang2021}\label{XLin2018} Let $A$ be the orthogonal array ${\rm OA}(N,n,s_1^{n_1}s_2^{n_2}\cdots s_v^{n_v},k)$ and $\{A_1,A_2,\ldots ,A_u\}$ be a set of orthogonal arrays ${\rm OA}(\frac{N}{u},n,s_1^{n_1}s_2^{n_2}\cdots s_v^{n_v},k_1)$. If
$\bigcup\limits_{i=1}^u A_i=A$ and $A_i\bigcap A_j=\varnothing$ for $i\neq j$, then $\{A_1,A_2,\ldots ,A_u\}$ is said to be an orthogonal partition of strength $k_1$ of $A$. In particular, when $k_1=0$, $\{A_1,A_2,\ldots ,A_u\}$ is still an orthogonal partition of $A$ of strength $0$.
\end{definition}

\begin{definition}\cite{FShi2021} \label{QQQ} An $((n, K, d))_s$ $\rm QECC$ has the quantum Singleton bound:
\begin{equation}\label{quantum bound 3}
  K\leq s^{n-2d+2}.
\end{equation}
An $((n,K,d))_{s_1,s_2,...,s_n}$ $\rm QECC$ satisfies the quantum Singleton bound:
\begin{equation}\label{quantum bound 1}
K\leq \min\{\prod\limits_{j\epsilon C}s_j\ |\ C\subset\{1,2,\ldots,n\}, |C|=n-2(d-1)\}\ 
\end{equation}
for $n\geq 2(d-1)+1$, and 
\begin{equation}\label{quantum bound 2}
K\leq 1 
\end{equation}
for $n=2(d-1).$
\end{definition}

A $\rm QECC$ that achieves the equality in Eq. (\ref{quantum bound 3}), Eq. (\ref{quantum bound 1}) or Eq. (\ref{quantum bound 2}) is called a quantum MDS (QMDS) code.

\begin{definition} \label{nqMDS} An $((n,K,d))_{s_1,s_2,...,s_n}$ is called a \rm {near quantum MDS (NQMDS)} code if 
$$K=\min\{\prod\limits_{j\epsilon C}s_j\ |\ C\subset\{1,2,\ldots,n\}, |C|=n-2(d-1)\}-1$$
for $n\geq2(d-1)+1$.
\end{definition}

\subsection{Important properties of OAs} 

\begin{lemma}\cite{SPang2019}{\label{distance}} The minimal distance of an ${\rm OA}(s^k, n, s, k)$ is $n-k+1$ for $s\geq2$ and $k\geq1$.
\end{lemma}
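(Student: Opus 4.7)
The plan is to prove the two bounds $md \le n-k+1$ and $md \ge n-k+1$ separately, both directly from the defining property of the orthogonal array. Note that ${\rm OA}(s^k, n, s, k)$ has index $\lambda = s^k / s^k = 1$: in any choice of $k$ columns, each of the $s^k$ possible $k$-tuples over $Z_s$ appears as a row exactly once.

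For the lower bound, I would argue by contradiction. Suppose two distinct rows $R_u, R_v$ agree in at least $k$ coordinates. Pick any $k$ of those coordinates and restrict attention to that $N \times k$ submatrix; then the common $k$-tuple on those coordinates appears in at least two rows, contradicting the fact that each $k$-tuple appears exactly once (by strength $k$ and index one). Hence any two distinct rows disagree in at least $n - (k-1) = n - k + 1$ positions, so $md(L) \ge n - k + 1$.

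For the upper bound, I would use the standard fact that strength $k$ implies strength $k-1$: in any $N \times (k-1)$ submatrix each of the $s^{k-1}$ possible $(k-1)$-tuples appears $N / s^{k-1} = s$ times. Since $s \ge 2$, there exist two distinct rows $R_u \ne R_v$ whose restrictions to any fixed set of $k-1$ columns coincide. These two rows agree on at least $k-1$ positions, so $Hd(R_u, R_v) \le n - (k-1) = n - k + 1$, giving $md(L) \le n - k + 1$.

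Combining the two inequalities yields $md(L) = n - k + 1$. No step presents a real obstacle: the argument is essentially a direct unpacking of the strength-$k$, index-one condition. The only thing to be slightly careful about is confirming that the two rows produced for the upper bound are genuinely distinct, which is immediate because there are $s \ge 2$ rows sharing the prescribed $(k-1)$-tuple.
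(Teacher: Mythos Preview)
Your argument is correct. The paper itself does not prove this lemma---it merely quotes it from \cite{SPang2019}---so there is no in-paper proof to compare against; your two-sided bound via the index-one property (no two distinct rows can share a $k$-tuple, while strength $k-1$ forces $s\ge 2$ rows to share any fixed $(k-1)$-tuple) is the standard and entirely adequate proof of this well-known fact about tight orthogonal arrays.
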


\begin{lemma}\label{ICMq} For a prime power $s$, let $(a_1, a_2,$ $\ldots, a_m)=((s)\oplus0_{s^{m-1}}, 0_s\oplus(s)\oplus0_{s^{m-2}}, \ldots, 0_{s^{m-1}}\oplus(s))$. $b_n=c_{i_1}a_{i_1}+\cdots+c_{i_{u-1}}a_{i_{u-1}}+a_{i_u}$ $(1\leq n\leq \frac{s^m-1}{s-1}-m, c_{i_v}\ \epsilon\ F_s, 1\leq u\leq m, 1\leq v\leq u-1)$. Then
$$A=(a_1, a_2, \ldots, a_m, b_1, b_2, \ldots, b_{\frac{s^m-1}{s-1}-m})$$
is a saturated orthogonal array ${\rm OA}(s^m, \frac{s^m-1}{s-1}, s, 2)$.
\end{lemma}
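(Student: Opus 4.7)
The plan is to recognize $A$ as the standard Rao--Hamming saturated design built from the projective geometry over $F_s^m$: every column of $A$ is an $F_s$-linear combination of $a_1,\dots,a_m$, the chosen combinations form a canonical system of representatives of the $\frac{s^m-1}{s-1}$ one-dimensional subspaces of $F_s^m$, and strength $2$ then follows from pairwise linear independence. The three steps are essentially independent.

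Step 1 (row indexing). The Kronecker-sum formula $a_j=0_{s^{j-1}}\oplus(s)\oplus 0_{s^{m-j}}$ presents the $s^m\times m$ matrix $[a_1,\dots,a_m]$ as the full factorial design: if the rows are indexed by $(x_1,\dots,x_m)\in F_s^m$ in lexicographic order with $x_1$ most significant, then the entry of $a_j$ in the row $(x_1,\dots,x_m)$ is exactly $x_j$. Consequently, for any $d=(d_1,\dots,d_m)\in F_s^m$, the $F_s$-linear combination $\sum_j d_j a_j$ is the column whose value in row $(x_1,\dots,x_m)$ equals $\sum_j d_j x_j$.

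Step 2 (identifying the columns). Each $a_j$ corresponds to the standard basis vector $e_j$. The prescription for $b_n$ -- the coefficient of $a_{i_u}$ is $1$ while no $a_{i_v}$ with $v>u$ appears -- picks out precisely those nonzero $d\in F_s^m$ whose \emph{last nonzero coordinate equals $1$}. Partitioning such $d$ by the position of that last $1$ gives $\sum_{j=1}^m s^{j-1}=\frac{s^m-1}{s-1}$ vectors, matching the column count of $A$. Moreover each one-dimensional subspace of $F_s^m$ contains exactly one such representative, so the columns of $A$ are in bijection with the lines through the origin in $F_s^m$.

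Step 3 (strength $2$ and saturation). For any two distinct columns with coefficient vectors $d\neq d'$, the vectors lie in different lines, hence $\{d,d'\}$ is $F_s$-linearly independent. The induced map $F_s^m\to F_s^2$, $(x_1,\dots,x_m)\mapsto\bigl(\sum_j d_jx_j,\ \sum_j d'_jx_j\bigr)$, is therefore surjective with every fibre of size $s^{m-2}$, so each ordered pair in $F_s^2$ occurs exactly $s^{m-2}$ times across the two columns -- this is the strength-$2$ condition. Since $\frac{s^m-1}{s-1}$ is the Rao bound for an $\mathrm{OA}(s^m,\cdot,s,2)$, the array is saturated.

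I expect the only mildly delicate point to be the bookkeeping in Step 2: confirming that the terse indexing in the statement really yields one representative per projective point, without over- or under-counting. Recasting the condition as ``last nonzero coordinate equals $1$'' makes both the count and the uniqueness transparent, after which Steps 1 and 3 reduce to the familiar finite-geometry argument.
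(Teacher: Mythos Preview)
Your argument is correct and follows the same idea as the paper, whose entire proof is the single sentence ``It follows from linear independence of any two columns of $A$.'' Your Steps~1--3 simply unpack this: the columns are representatives of the projective points of $F_s^m$, hence pairwise independent, and Step~3 is the standard derivation of strength~$2$ from that independence.
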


\begin{proof} It follows from linear independence of any two columns of $A$.
\end{proof}

\begin{remark} The method of Lemma \ref{ICMq} is called independent columns method, abbreviated {\rm IC} method.
\end{remark}

\begin{lemma}\cite{RYan2023} \label{mul} Assume that $A$ is an ${\rm OA}(N_1,n,s_1,t)$ with $md(A)=h_1$, and that $B$ is an ${\rm OA}(N_2,n,s_2,$
$t)$ with $md(B)=h_2$. Let $h=\min \{h_1, h_2\}$. Then there exists an ${\rm OA}(N_1N_2,n,s_1s_2,t)$ with $md=h$.
\end{lemma}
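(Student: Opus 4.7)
The plan is to explicitly construct the desired OA by a symbol-combining Kronecker-type operation on $A$ and $B$, then verify strength and distance separately. Concretely, I would identify the alphabet $Z_{s_1 s_2}$ with pairs $(a,b)\in Z_{s_1}\times Z_{s_2}$ via the bijection $(a,b)\mapsto s_2 a+b$, and form the $N_1 N_2\times n$ matrix $C$ whose rows are indexed by pairs $(i,j)$ with $1\leq i\leq N_1$, $1\leq j\leq N_2$, where row $(i,j)$ has $k$th entry equal to $s_2\cdot A_{i,k}+B_{j,k}$. This is the natural candidate and I expect it to work.

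To verify that $C$ is an $\mathrm{OA}(N_1N_2,n,s_1s_2,t)$, I would fix any $t$ columns and any target tuple $(c_1,\ldots,c_t)\in Z_{s_1s_2}^t$. Writing $c_\ell=s_2 a_\ell+b_\ell$ uniquely with $a_\ell\in Z_{s_1}$, $b_\ell\in Z_{s_2}$, the number of rows $(i,j)$ of $C$ producing this tuple is the product of the number of rows of $A$ producing $(a_1,\ldots,a_t)$ in those columns and the number of rows of $B$ producing $(b_1,\ldots,b_t)$. By the strength-$t$ assumption on $A$ and $B$, these equal $N_1/s_1^t$ and $N_2/s_2^t$ respectively, giving $N_1N_2/(s_1s_2)^t$ occurrences in $C$, which is the required uniformity.

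For the distance computation I would split into three cases for two distinct rows $(i,j)\ne(i',j')$ of $C$. If $i=i'$ and $j\ne j'$, the rows of $C$ differ exactly where rows $j$ and $j'$ of $B$ differ, giving Hamming distance $\mathrm{Hd}(B_j,B_{j'})\ge h_2\ge h$; symmetrically if $i\ne i'$ and $j=j'$ we get at least $h_1\ge h$. If $i\ne i'$ and $j\ne j'$, the entries at column $k$ coincide iff both $A_{i,k}=A_{i',k}$ and $B_{j,k}=B_{j',k}$, so the distance is the size of the union of the support of $A_i-A_{i'}$ and that of $B_j-B_{j'}$, which is at least $\max(h_1,h_2)\ge h$. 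Hence $md(C)\ge h$, and taking a pair of rows of (say) $A$ realizing $h_1=\min(h_1,h_2)=h$ together with any common $j$ shows that this bound is achieved, giving $md(C)=h$.

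The one step that requires a bit of care is the mixed case $i\ne i'$, $j\ne j'$, because one might naively worry that ``carries'' from combining symbols could make two distinct symbol pairs collide; the key observation is that the map $(a,b)\mapsto s_2 a+b$ is injective on $Z_{s_1}\times Z_{s_2}$, so a coincidence of combined symbols forces coincidence of both components, which is what drives the union-of-supports estimate above. Beyond this, the argument is bookkeeping, and no further structural hypothesis on $A$ or $B$ is needed.
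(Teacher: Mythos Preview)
Your argument is correct and complete: the symbol-pairing (Kronecker-type) construction is exactly the standard way to produce the $\mathrm{OA}(N_1N_2,n,s_1s_2,t)$, and your case analysis for the minimum distance is airtight, including the achievability step. The paper itself does not prove this lemma; it is quoted from \cite{RYan2023} without proof, so there is no in-paper argument to compare against, but your construction is the same one underlying that reference.
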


\begin{lemma}\cite{ZZheng2021}(Expansive replacement method)\label{replacement} Suppose $A$ is an ${\rm OA}$ of strength $k$ with column $1$ having $d_1$ levels and that $B$ also is an ${\rm OA}$ of strength $k$ with $d_1$ rows. After making a one-to-one mapping between the levels of column $1$ in $A$ and the rows of $B$, if each level of column $1$ in $A$ is replaced by the corresponding row from $B$, we can obtain an ${\rm OA}$ of strength $k$.
\end{lemma}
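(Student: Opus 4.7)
The plan is to verify the strength-$k$ property of the output array directly from the definition by a short case analysis based on where the chosen columns live. Write $A$ as an $N\times n_A$ array whose first column takes $d_1$ levels, and $B$ as a $d_1\times n_B$ strength-$k$ array, with $\phi$ the one-to-one map between levels of column $1$ of $A$ and rows of $B$. The replacement produces an $N\times (n_B+n_A-1)$ array $C$ whose columns split into a block of $n_B$ ``$B$-columns'' (replacing column $1$ of $A$) and a block of $n_A-1$ ``$A$-columns'' (inherited from columns $2,\ldots,n_A$ of $A$). To show $C$ has strength $k$, pick any $k$ columns: $j$ from the $B$-block and $k-j$ from the $A$-block, for some $0\le j\le k$, and count how often a prescribed tuple $(b_1,\ldots,b_j,c_1,\ldots,c_{k-j})$ occurs as a row-restriction.

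Condition on the level $v$ in column $1$ of the originating row of $A$. The $j$ entries in the $B$-columns are then exactly the corresponding entries of row $\phi(v)$ of $B$, so they are determined by $v$ alone; and the $k-j$ entries in the $A$-columns are the original entries in that row of $A$. Because $B$ has strength $k\ge j$, the number $\mu$ of levels $v$ for which the chosen $j$ coordinates of row $\phi(v)$ equal $(b_1,\ldots,b_j)$ is the same constant for every such tuple. Because $A$ has strength $k$ on the $k$ columns consisting of column $1$ together with the chosen $k-j$ remaining columns, the number of rows of $A$ with column $1$ equal to $v$ and the chosen coordinates equal to $(c_1,\ldots,c_{k-j})$ is a constant $\lambda$, independent of $v$ and of the prescribed tuple. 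Multiplying gives $\lambda\mu$ rows of $C$ matching the prescribed $k$-tuple, independent of the tuple, which is exactly the strength-$k$ condition.

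The two extreme cases $j=0$ and $j=k$ are degenerate checks: $j=0$ reduces instantly to the strength-$k$ property of $A$ restricted to the $A$-columns, and $j=k$ reduces to the strength-$k$ property of $B$ counted with multiplicity $N/d_1$ coming from the uniform distribution of column $1$ of $A$. The only real obstacle is the mixed case, and the key subtlety there is to invoke strength $k$ of $A$ on the \emph{specific} set of $k$ columns that includes column $1$, so that the conditional count on a fixed $v$ is valid; once this is set up, no computation is needed beyond the multiplicative bookkeeping above.
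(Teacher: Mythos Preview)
Your argument is correct. Note, however, that the paper does not supply its own proof of this lemma: it is quoted as a known construction from \cite{ZZheng2021} (the expansive replacement method is classical in orthogonal-array theory, going back at least to Hedayat--Sloane--Stufken), so there is nothing in the paper to compare your proof against.

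On the substance: your case split on $j$ (the number of chosen columns falling in the $B$-block) and the conditioning on the level $v$ in column~$1$ of $A$ is the standard direct verification. One small wording fix: in the mixed case $1\le j\le k-1$ you are really invoking strength $k-j+1\le k$ of $A$ on the set $\{\text{column }1\}\cup\{\text{the chosen }k-j\text{ columns}\}$, a set of $k-j+1$ columns rather than exactly $k$; since strength $k$ implies every lower strength, the constant $\lambda$ you need is still guaranteed, and the product count $\lambda\mu$ goes through unchanged.
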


\subsection{Important properties of CECCs and QECCs} 

\begin{lemma}\cite{ASHedayat1999}\label{t+1} If $C$ is an $(n, N, d)_s$ $\rm CECC$ over $F_s$ with dual distance $d^\bot$, then the codewords of $C$ form the rows of an ${\rm OA}(N, n, s, d^\bot-1)$ with entries from $F_s$. Conversely, the rows of a linear ${\rm OA}(N, n, s, k)$ over $F_s$ form an $(n, N, d)_s$ $\rm CECC$ over $F_s$ with dual distance $d^\bot\geq k+1$. If the orthogonal array has strength $k$ but not $k+1, d^\perp$ is precisely $k+1$.
\end{lemma}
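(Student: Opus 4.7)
The plan is to prove both directions via Fourier analysis on $F_s^n$, which underlies the MacWilliams identities. The key observation is that the number of codewords whose restriction to a fixed set $T$ of coordinates equals a prescribed tuple $v$ can be written as a character sum over $F_s^T$; vanishing of these sums on nonzero frequencies is exactly the statement that $C$ has large dual distance. Fix a nontrivial additive character $\chi$ of $F_s$. For $T\subset\{1,\ldots,n\}$ with $|T|=t$ and $u\in F_s^T$, set
$$S_T(u)\;=\;\sum_{c\in C}\chi\!\bigl(u\cdot c|_T\bigr),\qquad N_T(v)\;=\;|\{c\in C:c|_T=v\}|.$$
Fourier inversion on $F_s^T$ then gives $N_T(v)=s^{-t}\sum_{u\in F_s^T}S_T(u)\chi(-u\cdot v)$.

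For the forward direction I would take $t=d^{\perp}-1$. Any nonzero $u\in F_s^T$, extended by zeros, yields a vector $\tilde u\in F_s^n$ of weight between $1$ and $t=d^{\perp}-1$. By the MacWilliams-based definition of dual distance, $|S_T(u)|^2$ contributes to the MacWilliams-transform coefficient $B_{\mathrm{wt}(\tilde u)}$, which vanishes by hypothesis; hence $S_T(u)=0$. Only the $u=0$ term survives, giving $N_T(v)=N/s^{d^{\perp}-1}$ for every $v$, which is the defining property of an $\mathrm{OA}(N,n,s,d^{\perp}-1)$. For the converse in the linear case I would argue by contradiction: given a linear $\mathrm{OA}(N,n,s,k)$ whose rows are $C$, if $d^{\perp}\leq k$ then some nonzero $h\in C^{\perp}$ is supported on a set $T$ with $|T|\leq k$; extending $T$ to size $k$, every codeword restricted to $T$ lies in the proper subspace $\{x:h|_T\cdot x=0\}$, so not every $v\in F_s^k$ occurs, contradicting strength $k$. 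Thus $d^{\perp}\geq k+1$.

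The sharp equality then follows by combining the two halves: if the linear OA has strength exactly $k$ and not $k+1$, the forward direction applied to $C$ would produce an $\mathrm{OA}(N,n,s,d^{\perp}-1)$, so $d^{\perp}-1\leq k$; together with $d^{\perp}\geq k+1$ this forces $d^{\perp}=k+1$. The main obstacle in my view is the forward direction for \emph{nonlinear} codes, where ``dual distance'' must be interpreted via the MacWilliams transform of the distance distribution rather than a minimum-weight notion; the character-sum argument above handles this uniformly, while a purely linear-algebraic route would not suffice.
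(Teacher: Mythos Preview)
The paper does not give its own proof of this lemma; it is quoted verbatim from Hedayat--Sloane--Stufken \cite{ASHedayat1999}, so there is nothing in the paper to compare against. Your character-sum argument is correct and is in fact the classical Delsarte proof that underlies the cited result: the identity $\sum_{\mathrm{wt}(u)=j}\bigl|\sum_{c\in C}\chi(u\cdot c)\bigr|^2=N^2B_j$ shows that $B_1=\cdots=B_{d^\perp-1}=0$ forces all low-weight character sums to vanish, whence equidistribution on every $(d^\perp-1)$-subset of coordinates; your linear converse via a short dual codeword, and the sandwich argument for the sharp case, are exactly the standard steps. One small point worth making explicit in a write-up is the nonnegativity step---that $B_j=0$ implies each individual character sum vanishes because $B_j$ is a sum of nonnegative terms---since that is where the possibly nonlinear case is handled, as you correctly flag at the end.
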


\begin{lemma} \cite{RYan2023} \label{QECC} Assume that there exists an ${\rm OA}(N,n,s,k)$ with md=$h$ and an orthogonal partition $\{A_1,\ldots, A_K\}$ of strength $k_0$. Let $d=\min \{k_0, h-1\}$. Then, there exists an $((n, K, d+1))_s$ $\rm QECC$.
\end{lemma}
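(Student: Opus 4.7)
The plan is to build the QECC as the span of $K$ explicit pure states, one per block of the orthogonal partition, and then verify the Knill--Laflamme conditions directly, with the minimum-distance hypothesis handling the $X$-part of any low-weight Pauli error and the strength-$k_0$ hypothesis handling the $Z$-part.

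Concretely, for each block $A_i$ of the partition I would set
\[
|\phi_i\rangle \;=\; \frac{1}{\sqrt{|A_i|}} \sum_{r\in A_i} |r\rangle \;\in\; (\mathbb{C}^s)^{\otimes n},
\]
and define the code $Q$ as the span of $|\phi_1\rangle,\ldots,|\phi_K\rangle$. Since $A_i\cap A_j=\varnothing$ for $i\neq j$, these $K$ states are already orthonormal, so $\dim Q=K$. The distance claim then reduces to the Knill--Laflamme criterion: for every generalized Pauli error $E=X_aZ_b$ of weight $\mathrm{wt}(E)=|\mathrm{supp}(a)\cup\mathrm{supp}(b)|\le d$, I must show that $\langle\phi_i|E|\phi_j\rangle=c_E\,\delta_{ij}$ for some scalar $c_E$ depending only on $E$.

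Expanding the matrix element,
\[
\langle\phi_i|X_aZ_b|\phi_j\rangle = \frac{1}{\sqrt{|A_i||A_j|}} \sum_{r\in A_i,\,r'\in A_j} \omega^{\,b\cdot r'}\,\langle r\mid r'+a\rangle,
\]
where $\omega$ is a primitive $s$-th root of unity. The inner product forces $r=r'+a$, so the rows $r$ and $r'$ differ in at most $|\mathrm{supp}(a)|\le d\le h-1$ positions. Because the whole array $A=\bigcup_i A_i$ has minimum Hamming distance $h$, two such rows can only be that close if they coincide, forcing $a=0$ and $r=r'$; disjointness of the blocks then forces $i=j$. For $a=0$ and $i=j$ the amplitude collapses to $|A_i|^{-1}\sum_{r\in A_i}\omega^{\,b\cdot r}$. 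Since $A_i$ has strength $k_0\ge d\ge|\mathrm{supp}(b)|$, its projection onto $\mathrm{supp}(b)$ is equireplicate, so additive character orthogonality makes this sum vanish whenever $b\neq 0$ and equal $|A_i|$ when $b=0$. In every case the amplitude is of the form $c_E\,\delta_{ij}$ with $c_E$ independent of $i$, which is exactly Knill--Laflamme for distance $d+1$.

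The main technical point to watch is keeping the Pauli-weight book-keeping straight: a single error of weight at most $d$ has both an $X$-support and a $Z$-support of size at most $d$, so the two hypotheses $d\le h-1$ and $d\le k_0$ must be applied on opposite sides of the $X_aZ_b$ split. Once that split is in hand, the rest is just additive character orthogonality together with the defining properties of an orthogonal partition, and no appeal to irredundancy or to a specific prime-versus-prime-power case is needed in the skeleton of the argument.
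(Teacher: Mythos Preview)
Your argument is correct. The construction of the basis states $|\phi_i\rangle$ is the same as the paper's, but the verification of the distance is genuinely different. In this paper Lemma~2.6 is only cited, while the mixed-alphabet generalization (Theorem~3.1) is proved by observing that $md(A)\ge h>d$ makes each block $A_i$ an \emph{irredundant} OA of strength~$d$, invoking the Goyeneche--\.{Z}yczkowski link (Lemma~2.8) to conclude that each $|\phi_i\rangle$ is a $d$-uniform state, and then appealing to the reduction-based characterization of QECCs (Lemma~2.7) to pass to the span. You instead check the Knill--Laflamme conditions by hand, splitting a weight-$\le d$ Pauli into its $X$- and $Z$-parts and using $md=h$ for the former and strength~$k_0$ for the latter. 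Your route is more self-contained (no black-box lemmas) and, notably, handles the off-diagonal terms $\langle\phi_i|E|\phi_j\rangle$ with $i\neq j$ explicitly; the paper's proof of Theorem~3.1 asserts that the span is a QECC once the basis states are $d$-uniform, but the vanishing of the cross-reductions---which is exactly where the global minimum-distance hypothesis on $A$ (not just on each $A_i$) is needed---is left implicit there. The paper's approach, on the other hand, packages the argument into reusable combinatorial notions (IrOA, $k$-uniform state) that transfer verbatim to the heterogeneous setting.
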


\begin{lemma}\cite{FShi2021}\label{fei21} Let $Q$ be a subspace of $\mathbb{C}^{s_1} \otimes \mathbb{C}^{s_2}\otimes \cdots \otimes\mathbb{C}^{s_n}$. If $Q$ is an $((n,K, k+1))_{s_1,s_2,...,s_n}$ $\rm QECC$, then for any $k$ parties, the reductions of all states in $Q$ to the $k$ parties are identical. The converse is true. Further if $Q$ is pure, then any state in $Q$ is a $k$-uniform state. The converse is also true. In particular, when $s_1=s_2=\cdots=s_n$, $Q$ is an $((n, K, k+1))_{s_1}$ $\rm QECC$.
\end{lemma}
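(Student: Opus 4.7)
The plan is to derive this via the Knill--Laflamme error-correction conditions, extended to a tensor product with varying local dimensions. First, I would record that an $((n,K,k+1))_{s_1,\ldots,s_n}$ QECC has distance $k+1$, which by the standard Knill--Laflamme framework is equivalent to requiring that for every multilocal operator $E$ supported on at most $k$ of the $n$ sites, the compression $PEP$ of $E$ to the code subspace $Q$ (with $P$ the projector onto $Q$) equals $\lambda_E P$ for some scalar $\lambda_E$ depending only on $E$. The mixed-dimensional version of this equivalence is argued just as in the single-alphabet case, since only the notion of ``weight'' (the number of sites on which $E$ acts non-trivially) enters, not the individual dimensions.

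Second, I would translate the Knill--Laflamme condition into a statement about reduced density matrices. Fix a subset $S\subseteq\{1,\ldots,n\}$ with $|S|=k$, and pick a trace-orthonormal product basis of the algebra of operators on $\bigotimes_{i\in S}\mathbb{C}^{s_i}$, for example a product of local Heisenberg--Weyl bases, one for each local dimension $s_i$. Every operator supported on $S$ expands in this basis, so as $E$ ranges over the basis the condition $\langle\psi|E|\psi\rangle=\lambda_E$ for every unit $|\psi\rangle\in Q$ is equivalent to saying that the reduced density matrix $\rho_S(|\psi\rangle)=\operatorname{Tr}_{\overline{S}}(|\psi\rangle\langle\psi|)$ is the same matrix for every state in $Q$. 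Varying $S$ over all $k$-subsets gives both implications of the first half of the lemma.

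For the ``pure'' half, I would invoke the fact that a pure (non-degenerate) code is one for which $\lambda_E=0$ whenever $E$ is a non-identity operator of weight at most $k$. Since in the basis expansion above the scalars $\lambda_E$ are, up to a normalisation, the coefficients $\operatorname{Tr}(E\rho_S)$ of $\rho_S$, the vanishing of $\lambda_E$ on every non-identity basis element forces $\rho_S$ to be the maximally mixed state $I/\prod_{i\in S}s_i$ on every $k$-subset $S$, which is the definition of $k$-uniformity. The converse runs in reverse through the same identification. The specialisation to $s_1=\cdots=s_n=s$ is an immediate reading of the same argument.

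The main technical obstacle, and the one place I would proceed carefully, is fixing the mixed-dimensional operator basis. In the single-alphabet case the Heisenberg--Weyl group supplies a trace-orthonormal basis with a clean weight stratification at once; for mixed alphabets one tensors together Heisenberg--Weyl bases of different orders and must verify that this product basis is still trace-orthonormal, that the induced weight really equals the size of the support, and that the resulting coefficient reconstruction recovers the entries of $\rho_S$ faithfully. Once this bookkeeping is in place, all four implications collapse into the single observation that the Knill--Laflamme scalars $\lambda_E$ are exactly the basis coefficients of the common reduction $\rho_S$.
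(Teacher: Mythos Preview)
Your approach via the Knill--Laflamme conditions is the standard route to this result and is essentially correct; the translation between the detection condition $PEP=\lambda_E P$ for weight-$\leq k$ operators and the constancy of $k$-party reductions is exactly the right mechanism, and your treatment of the pure case (vanishing of the non-identity coefficients forcing maximal mixing) is sound.

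There is, however, nothing to compare against: in the paper this lemma is quoted from \cite{FShi2021} and carries no proof whatsoever---it is used as a black box (indeed, the sentence immediately following it says the lemma ``can be regarded as the definition'' of such a QECC). So your write-up is not an alternative to the paper's argument but a supplement the paper never attempts. If you want to tighten it, the one place worth a sentence of care is the converse of the first half: from ``$\langle\psi|E|\psi\rangle$ is constant over unit $|\psi\rangle\in Q$'' you need $PEP=\lambda_E P$, which follows by complex polarization (the quadratic form determines the operator over $\mathbb{C}$); you allude to this only implicitly. The mixed-alphabet operator basis you flag as the ``main technical obstacle'' is in fact routine---tensoring Heisenberg--Weyl bases of different orders preserves trace-orthonormality factor by factor---so you can downgrade that concern.
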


The Lemma \ref{fei21} can be regarded as the definition of a QECC $((n,K, k+1))_{s_1,s_2,...,s_n}$, where $n$ is the number of qudits, $K$ is the dimension of the encoding state, $k+1$ is the minimum distance, and $s_1,s_2,\ldots,s_n$ are the alphabet size.

\begin{lemma}\cite{DGoyeneche2016}\label{K-} If 
$L=\left(\begin{array}{ccccccc}
a_{11}& a_{12} & \ldots & a_{1n}\\
a_{21}& a_{22} & \ldots & a_{2n}\\
\vdots & \vdots &   & \vdots\\
a_{N1}& a_{N2} & \ldots & a_{Nn}\\
\end{array}\right)$ is an ${\rm IrOA}(N,n,s_1^{n_1}s_2^{n_2}\cdots s_v^{n_v},k)$, then the superposition of $N$ product states, $|\phi_{s_1^{n_1}s_2^{n_2}\cdots s_v^{n_v}}\rangle =|a_{11}a_{12}\ldots a_{1n}\rangle+|a_{21}a_{22}\ldots a_{2n}\rangle+\cdots+|a_{N1} a_{N2}\ldots a_{Nn}\rangle$ is a $k$-uniform state.
\end{lemma}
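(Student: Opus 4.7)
The plan is to show that for any choice of $k$ columns $S \subset \{1,2,\ldots,n\}$, the reduced density matrix obtained by tracing out the $n-k$ complementary parties $\bar S$ is proportional to the identity on $S$; by Lemma \ref{fei21} (or the standard equivalent definition of a $k$-uniform state), this is exactly what it means for $|\phi_{s_1^{n_1}s_2^{n_2}\cdots s_v^{n_v}}\rangle$ to be $k$-uniform.

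First I would write out, for a fixed subset $S$ of $k$ columns, the (unnormalized) reduced state
\[
\rho_S \;=\; \mathrm{Tr}_{\bar S}\bigl(|\phi\rangle\langle\phi|\bigr) \;=\; \sum_{i,j=1}^{N} \langle R_j|_{\bar S}\,|\,R_i|_{\bar S}\rangle \;\; |R_i|_S\rangle\langle R_j|_S|,
\]
where $R_i|_S$ and $R_i|_{\bar S}$ denote the restrictions of the $i$th row to $S$ and to $\bar S$, respectively. The inner product $\langle R_j|_{\bar S}\,|\,R_i|_{\bar S}\rangle$ equals $1$ if the two restrictions agree and $0$ otherwise, so only pairs $(i,j)$ with $R_i|_{\bar S}=R_j|_{\bar S}$ contribute.

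Next I would invoke the irredundancy property. Since $L$ is an $\mathrm{IrOA}$, every subset of $n-k$ columns produces $N$ distinct row-restrictions; in particular $R_i|_{\bar S}=R_j|_{\bar S}$ forces $i=j$. This collapses the double sum to the diagonal:
\[
\rho_S \;=\; \sum_{i=1}^{N} |R_i|_S\rangle\langle R_i|_S|.
\]
Then I would apply the strength-$k$ orthogonal-array property on the columns of $S$. If $S$ contains $m_j$ columns at $s_j$ levels (so $\sum_{j=1}^{v} m_j = k$), then each of the $\prod_j s_j^{m_j}$ possible $k$-tuples occurs as $R_i|_S$ exactly $N/\prod_j s_j^{m_j}$ times. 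Hence
\[
\rho_S \;=\; \frac{N}{\prod_{j=1}^{v} s_j^{m_j}} \sum_{\mathbf{x}} |\mathbf{x}\rangle\langle \mathbf{x}| \;=\; \frac{N}{\prod_{j=1}^{v} s_j^{m_j}}\, I_S,
\]
the maximally mixed state on the chosen $k$ parties (up to the overall normalization of $|\phi\rangle$).

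Since $S$ was arbitrary, every $k$-party marginal of the normalized superposition is maximally mixed, which by Lemma \ref{fei21} means $|\phi_{s_1^{n_1}s_2^{n_2}\cdots s_v^{n_v}}\rangle$ is a $k$-uniform state. There is no real obstacle here; the only point that needs care is the mixed-alphabet bookkeeping in the last step — making sure the strength-$k$ definition is applied with the right repetition count $N/\prod_j s_j^{m_j}$ for the specific composition of $S$, rather than a single uniform count as in the symmetric case.
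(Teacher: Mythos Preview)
Your argument is correct and is the standard one: irredundancy kills the off-diagonal terms in $\rho_S$, and the strength-$k$ property forces the surviving diagonal to be a constant multiple of the identity. The mixed-alphabet bookkeeping with the repetition count $N/\prod_j s_j^{m_j}$ is handled properly.

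Note, however, that the paper does not actually prove this lemma; it is quoted as a known result from \cite{DGoyeneche2016} and stated without proof. So there is no ``paper's own proof'' to compare against. Your write-up is essentially the proof given in the cited reference (adapted to the heterogeneous setting), and nothing more is needed. One cosmetic remark: you don't really need to invoke Lemma~\ref{fei21} at the end, since ``every $k$-party marginal is maximally mixed'' is the definition of a $k$-uniform state rather than a consequence of that lemma; Lemma~\ref{fei21} is about the relation between $k$-uniform states and pure QECCs, which is a separate (downstream) fact.
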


\section{Main Results} \label{Main}

In this section, we construct NMDS codes including almost extremal NMDS codes, $m$-MDS codes and NQMDS codes over mixed alphabets through OAs. Here we first give the relationship between OAs and QECCs. There exists a perfect match between the parameters of an ${\rm OA}(N,n,s_1^{n_1}s_2^{n_2}\cdots s_v^{n_v}, k)$, $A$, with an orthogonal partition $\{A_1, A_2,\ldots, A_K\}$ of strength $k_1$ and the parameters of an $((n, K, d))_{s_1^{n_1}s_2^{n_2}\cdots s_v^{n_v}}$ QECC, which is listed in Table \ref{table4}.

\renewcommand{\tablename}{Table}
 \begin{table}[htb]
 \renewcommand\arraystretch{1.1}
  \caption{ Correspondence between parameters of OAs and QECCs.}
   \label{table4}
     \tabcolsep=0.7cm
$$\begin{tabular}{c l l}
\hline
 &${\rm OA}(N,n,s_1^{n_1}s_2^{n_2}\cdots s_v^{n_v}, k)$&$\rm QECC$ $((n, K, d))_{s_1^{n_1}s_2^{n_2}\cdots s_v^{n_v}}$\\ \hline
$n$&Number of factors&Length of code\\ \hline
$K$&Number of partitioned blocks& Dimension of code\\ \hline
$d$& $\min\{k_1+1, md(A)\}$&Minimum distance of code\\ \hline
$s_1,s_2,\ldots,s_v$&Number of levels&Alphabet size\\ \hline
\end{tabular}$$
\end{table}

\subsection{Construction of MDS, NMDS and $m$-MDS CECCs through orthogonal arrays}\label{mMDS}

\begin{theorem} \label{mMDS} For a prime power $s$, suppose $A$ is an ${\rm OA}(s^k, n, s, t)$ $(n\geq k)$ constructed by {\rm IC} method. The rows of $A$ form a $C=[n, k, d]_s$ ${\rm CECC}$. Then

$(1)$. $C$ is {\rm MDS} if and only if the strength of $A$ is $k$;

$(2)$. $C$ is {\rm NMDS} if and only if $C$ is {\rm AMDS} and the strength of $A$ is $k-1$;

$(3)$. $C$ is {\rm $m$-MDS} if and only if $S(C)=m$ and the strength of $A$ is $k-m\ (k> m)$.
\end{theorem}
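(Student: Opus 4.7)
The plan is to reduce all three claims to a single observation: under the hypotheses the strength of $A$ is exactly $d^{\perp}-1$, so it directly controls the Singleton defect of $C^{\perp}$. First I would note that the IC method writes every column of $A$ as an $F_s$-linear combination of the $k$ independent columns $a_1,\dots,a_k$, so $A$ is a linear orthogonal array and its rows form a linear $[n,k,d]_s$ code $C$. The dual has parameters $[n,n-k,d^{\perp}]_s$, whence
\[
S(C^{\perp})=n-(n-k)+1-d^{\perp}=k+1-d^{\perp}.
\]
Lemma~\ref{t+1} then provides the crucial translation: since $A$ is linear, its maximal strength equals $d^{\perp}-1$. In particular, "strength of $A$ equals $k-m$" is equivalent to $d^{\perp}=k-m+1$, which is equivalent to $S(C^{\perp})=m$.

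With this dictionary in hand, part (3) is essentially definitional: an $m$-MDS code satisfies $S(C)=S(C^{\perp})=m$, so the hypothesis $S(C)=m$ together with "strength $k-m$" gives the forward implication, while conversely $m$-MDS forces both $S(C)=m$ and $S(C^{\perp})=m$, the latter forcing strength exactly $k-m$ by Lemma~\ref{t+1}. Part (2) is simply the specialization $m=1$, where AMDS already encodes $S(C)=1$. For part (1), I would appeal to the classical duality of MDS codes: $S(C)=0$ iff $S(C^{\perp})=0$ iff $d^{\perp}=k+1$ iff the strength of $A$ equals $k$. The converse uses the involution $(C^{\perp})^{\perp}=C$ together with the same translation.

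The main (and quite mild) subtlety is the quantifier hidden in "strength is $k-m$": I read this as the \emph{maximal} strength being $k-m$, which is what Lemma~\ref{t+1} requires to pin down $d^{\perp}$ exactly, rather than merely bound it from below. The uniform upper bound $d^{\perp}\leq k+1$, coming from $\dim C^{\perp}=n-k$ and the ordinary Singleton bound, guarantees that the strength never exceeds $k$, so every value $0\leq m\leq k$ is consistent with some legitimate strength and no parameter regime is lost. Beyond this small bookkeeping point, the argument is a direct unwinding of the definitions of MDS, AMDS, NMDS and $m$-MDS.
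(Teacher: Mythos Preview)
Your proposal is correct and follows essentially the same approach as the paper: both arguments establish linearity of $C$ via the IC construction, invoke Lemma~\ref{t+1} to identify the maximal strength of $A$ with $d^{\perp}-1$, and then read off each part from the definitions of MDS, NMDS, and $m$-MDS. The only minor difference is in the converse of part~(1): the paper computes $d=n-k+1$ directly from Lemma~\ref{distance}, whereas you route through MDS duality ($C^{\perp}$ MDS $\Rightarrow$ $C$ MDS); both are standard and equally valid, and your uniform treatment of all three parts via $S(C^{\perp})=k+1-d^{\perp}$ is arguably tidier.
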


\begin{proof} Suppose $C$ is an ${\rm OA}(s^k, n, s, t)$ constructed from linear combination of $k$ independent columns $((s)\oplus0_{s^{k-1}}, 0_s\oplus(s)\oplus0_{s^{k-2}}, \ldots, 0_{s^{k-1}}\oplus(s))$. Because the row rank of $C$ is equal to its column rank, $C$ is a linear code $[n, k, d]_s$.

(1). If $C$ is MDS, from \cite{ASHedayat1999}, $C^\perp$ is also MDS. So $d^\perp=k+1$. It follows from Lemma \ref{t+1} that $t=k$. Conversely, if $t=k$, from Lemma \ref{distance} we have $d=n-k+1$. Thus $C$ is MDS.

(2). If $C=[n, k, d]_s$ is NMDS, then both $C$ and $C^\perp=[n, n-k, d^\bot]_s$ are AMDS. Thus $d^\perp=n-(n-k)=k$. It follows from Lemma \ref{t+1} that $t=d^\bot-1=k-1$. So $C$ is AMDS and $t=k-1$. Conversely, if $C$ is AMDS and $t=k-1$, we have $C^\bot$ is an $[n, n-k, k]_s$ CECC. i.e. $C^\perp$ is AMDS. Thus $C$ is NMDS.

(3). If $C=[n, k, d]_s$ is $m$-MDS, that is $S(C)=S(C^\perp)=m$ where $C^\perp=[n, n-k, d^\bot]_s$. Since $S(C^\bot)=n-(n-k)+1-d^\bot$, we have $d^\bot=k+1-m$. It follows from Lemma \ref{t+1} that $t=d^\bot-1=k-m$. So $S(C)=m$ and $t=k-m$. Conversely, if $S(C)=m$ and $t=k-m$, then $C^\bot$ is an $[n, n-k, k-m+1]_s$ CECC. Obviously, $S(C^\bot)=n-(n-k)+1-(k-m+1)=m$. That is, $S(C)=S(C^\bot)=m$. Thus $C$ is $m$-MDS.
\end{proof}

\begin{example} Let $s=2$ and $k=3$ in Theorem \ref{mMDS}. Let $(a_1,a_2,a_3)=((2)\oplus0_4, 0_2\oplus(2)\oplus0_2,0_4\oplus(2))$.

$(i)$. Suppose $t=k=3$. According to Lemma \ref{ICMq}, $A=(a_1,a_2,a_3)$ and $B=(a_1,a_2,a_3,a_1+a_2+a_3)$ are ${\rm OA}(8,3,2,3)$ and ${\rm OA}(8,4,2,3)$, respectively. Then, we have $[3,3,1]_2$ and $[4,3,2]_2$ {\rm MDS} codes through Theorem \ref{mMDS} $(1)$.

$(ii)$. Suppose $t=k-1=2$. From Lemma \ref{ICMq}, $A_1=(a_1,a_2,a_3,a_2+a_3)$, $A_2=(a_1,a_1+a_2,a_1+a_3,a_2+a_3,a_1+a_2+a_3)$, $A_3=(a_1,a_2,a_3,a_1+a_2,a_1+a_3,a_2+a_3)$ and $A_4=(a_1,a_2,a_3,a_1+a_2,a_1+a_3,a_2+a_3,a_1+a_2+a_3)$ are ${\rm OA}(8,4,2,2)$, ${\rm OA}(8,5,2,2)$, ${\rm OA}(8,6,2,2)$ and ${\rm OA}(8,7,2,2)$, respectively. Then, we have $[4,3,1]_2$, $[5,3,2]_2$, $[6,3,3]_2$ and $[7,3,4]_2$ {\rm NMDS} codes according to Theorem \ref{mMDS} $(2)$. 
\end{example}

\begin{example} Let $s=2$ and $k=4$ in Theorem \ref{mMDS}. Let $(a_1,a_2,a_3,a_4)=((2)\oplus0_8, 0_2\oplus(2)\oplus0_4,0_4\oplus(2)\oplus0_2, 0_8\oplus(2))$.

$(i)$. Suppose $t=k=4$. According to Lemma \ref{ICMq}, $A=(a_1,a_2,a_3,a_4)$ and $B=(a_1,a_2,a_3,a_4, a_1+a_2+a_3+a_4)$ are ${\rm OA}(16, 4, 2, 4)$ and ${\rm OA}(16, 5, 2, 4)$, respectively. Then, we have $[4,4,1]_2$ and $[5,4,2]_2$ {\rm MDS} codes through Theorem \ref{mMDS} $(1)$.

$(ii)$. Suppose $t=k-1=3$. From Lemma \ref{ICMq}, $A_1=(a_1,a_2,a_3,a_4,a_1+a_2+a_3)$, $A_2=(a_1,a_2,a_3,a_4,a_1+a_2+a_3,a_2+a_3+a_4)$, $A_3=(a_1,a_2,a_3,a_4,a_1+a_2+a_3,a_1+a_2+a_4,a_2+a_3+a_4)$ and $A_4=(a_1,a_2,a_3,a_4,a_1+a_2+a_3,a_1+a_2+a_4,a_1+a_3+a_4,a_2+a_3+a_4)$ are ${\rm OA}(16, 5, 2, 3)$, ${\rm OA}(16, 6, 2, 3)$, ${\rm OA}(16, 7, 2, 3)$ and ${\rm OA}(16, 8, 2, 3)$, respectively. Then, we have $[5,4,1]_2$, $[6,4,2]_2$, $[7,4,3]_2$ and $[8,4,4]_2$ {\rm NMDS} codes according to Theorem \ref{mMDS} $(2)$. 

In particular, $[8, 4, 4]_2$ is {\rm NMDS} self-dual.

$(iii)$. Suppose $m=2$ and $t=k-2=2$. From Lemma \ref{ICMq}, $A_1=(a_1,a_2,a_3,a_4,a_1+a_2,a_1+a_3)$, $A_2=(a_1,a_2,a_3,a_4,a_1+a_2,a_1+a_3,a_2+a_3)$, $A_3=(a_1,a_2,a_3,a_4,a_1+a_2,a_1+a_3,a_2+a_4,a_3+a_4)$ and $A_4=(a_1,a_2,a_3,a_4,a_1+a_2,a_1+a_3,a_2+a_4,a_3+a_4,a_1+a_2+a_3+a_4)$ are ${\rm OA}(16, 6, 2, 2)$, ${\rm OA}(16, 7, 2, 2)$, ${\rm OA}(16, 8, 2, 2)$ and ${\rm OA}(16, 9, 2, 2)$, respectively. Then, from Theorem \ref{mMDS} $(3)$, we can get $[6,4,1]_2$, $[7,4,2]_2$, $[8,4,3]_2$ and $[9,4,4]_2$ {\rm $2$-MDS} codes.
\end{example}

\begin{example} Let $s=3$ and $k=3$ in Theorem \ref{mMDS}. Let $(a_1,a_2,a_3)=((3)\oplus0_9, 0_3\oplus(3)\oplus0_3, 0_9\oplus(3))$.

$(i)$. Suppose $t=k=3$. According to Lemma \ref{ICMq}, $A=(a_1,a_2,a_3)$ and $B=(a_1,a_2,a_3,a_1+a_2+a_3)$ are ${\rm OA}(27, 3, 3, 3)$ and ${\rm OA}(27, 4, 3, 3)$, respectively. Then, from Theorem \ref{mMDS} $(1)$, there are two {\rm MDS} codes $[3,3,1]_3$ and $[4,3,2]_3$.

$(ii)$. Suppose $t=k-1=2$. From Lemma \ref{ICMq}, $A_1=(a_1,a_2,a_3,a_1+a_2,a_1+a_2+a_3,2a_1+a_3,2a_2+a_3,2a_1+a_2+a_3,a_1+2a_2+a_3)$, 
$A_2=(a_2,a_3,a_1+a_2,a_1+a_3,a_2+a_3,a_1+a_2+a_3,2a_1+a_2,2a_1+a_3)$, 
$A_3=(a_2,a_1+a_2,a_1+a_3,a_2+a_3,a_1+a_2+a_3,2a_1+a_2,2a_1+a_3)$, 
$A_4=(a_1+a_2,a_1+a_3,a_2+a_3,2a_1+a_2+a_3,a_1+2a_2+a_3,2a_1+2a_2+a_3)$,
$A_5=(a_1,a_2,a_3,a_1+a_2,a_1+a_3)$ and $A_6=(a_1,a_2,a_3,a_1+a_2)$ are ${\rm OA}(27, 9, 3, 2)$, ${\rm OA}(27, 8, 3, 2)$, ${\rm OA}(27, 7, 3, 2)$, ${\rm OA}(27, 6, 3, 2)$, ${\rm OA}(27, 5, 3, 2)$ and ${\rm OA}(27, 4, 3, 2)$, respectively. Then, according to Theorem \ref{mMDS} $(2)$, there exist six {\rm NMDS} codes $[9,3,6]_3$, $[8,3,5]_3$, $[7,3,4]_3$, $[6,3,3]_3$, $[5,3,2]_3$ and $[4,3,1]_3$.

\end{example}

Here, both $[6,3,3]_2$ and $[8,3,5]_3$ are almost extremal NMDS according to the definition of almost extremal NMDS codes.

\subsection{Construction of near quantum MDS codes over mixed alphabets through orthogonal arrays} \label{3.3}

\begin{theorem}\label{ORQU} Assume that there exists an ${\rm OA}(N,n,s_1^{n_1}s_2^{n_2}\cdots s_v^{n_v}, k)$ with $md=h$ and an orthogonal partition $\{A_1,\ldots, A_K\}$ of strength $k_0$. Let $d=\min\{k_0,h-1\}$. Then, there exists an $((n, K, d+1))_{s_1^{n_1}s_2^{n_2}\cdots s_v^{n_v}}$ {\rm QECC}.
\end{theorem}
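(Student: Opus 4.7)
The plan is to mimic the single-alphabet construction behind Lemma \ref{QECC}, but with careful bookkeeping in the mixed tensor-product Hilbert space $\mathbb{C}^{s_1}\otimes\cdots\otimes\mathbb{C}^{s_n}$ that underlies Definition \ref{QQQ}. For each block $A_i$ of the partition (each with $N/K$ rows), define the candidate basis vector
$$|\phi_i\rangle=\frac{1}{\sqrt{N/K}}\sum_{r\in A_i}|r\rangle,$$
reading each row $r$ as a basis ket in the mixed space. Because the $A_i$ are pairwise disjoint, the $K$ vectors $|\phi_1\rangle,\ldots,|\phi_K\rangle$ have disjoint supports, hence form an orthonormal system spanning a subspace $Q$. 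I will argue that $Q$ is an $((n,K,d+1))_{s_1^{n_1}\cdots s_v^{n_v}}$ QECC by verifying the Knill--Laflamme conditions directly for every operator $E=U_S$ that acts non-trivially only on a subset $S\subseteq\{1,\ldots,n\}$ of size $|S|\leq d$.

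Expanding gives
$$\langle\phi_i|E|\phi_j\rangle=\frac{K}{N}\sum_{(r,r')\in A_i\times A_j,\; r|_{\overline{S}}=r'|_{\overline{S}}}\langle r_S|U_S|r'_S\rangle.$$
If $r\neq r'$ in such a pair, then $r$ and $r'$ agree on $\overline{S}$ but differ somewhere, so their Hamming distance is at most $|S|\leq d\leq h-1$, contradicting $md(A)=h$. Therefore only diagonal contributions $r=r'$ survive. For $i\neq j$, disjointness of the blocks forces the empty sum, giving $\langle\phi_i|E|\phi_j\rangle=0$. For $i=j$ the remaining sum is $\tfrac{K}{N}\sum_{r\in A_i}\langle r_S|U_S|r_S\rangle$; since $A_i$ has strength $k_0\geq|S|$, the restrictions of its rows to $S$ realise every $|S|$-tuple equally often, so this sum reduces to a scalar multiple of $\mathrm{tr}(U_S)$ that depends only on $S$ and the column alphabet sizes, not on $i$. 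The Knill--Laflamme criterion is thus satisfied with coefficient matrix proportional to the identity, certifying $Q$ as a pure QECC of minimum distance at least $d+1$. Lemma \ref{fei21} then recasts the same conclusion in the language of reductions: on any $d$ parties, the reduced density operators of all states in $Q$ coincide.

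The main obstacle is a bookkeeping chore rather than a substantive hurdle: one must confirm that ``strength $k_0$'' of a block $A_i$ in the sense of Definition \ref{XLin2018} really does deliver the exact equal-multiplicity statement on restrictions to an arbitrary $|S|\leq k_0$ columns in the mixed-level setting, so that the diagonal computation produces a number genuinely independent of $i$. Once that point is nailed down, the two thresholds built into $d=\min\{k_0,h-1\}$ play complementary roles---the bound $h-1$ kills both the cross terms (for $i\neq j$) and the non-diagonal same-block contributions, while $k_0$ is what makes the diagonal uniform---and the proof becomes a direct transcription of the argument for Lemma \ref{QECC} to the mixed Hilbert space $\mathbb{C}^{s_1}\otimes\cdots\otimes\mathbb{C}^{s_n}$.
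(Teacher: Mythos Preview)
Your argument is correct. The paper's own proof is much terser and takes a slightly different packaging: it first observes that both the full array and each block $A_i$ are irredundant OAs of strength $d$ (using $d\leq h-1$ for irredundancy and $d\leq k_0$ for the strength of the blocks), invokes Lemma~\ref{K-} to turn each $A_i$ into a $d$-uniform state $|\phi_i\rangle$, and then appeals to Lemma~\ref{fei21} to conclude that the span is an $((n,K,d+1))$ code. Your direct Knill--Laflamme verification is essentially the computation hiding behind that chain of citations, and it has the virtue of making explicit a point the paper leaves implicit: Lemma~\ref{fei21} requires identical reductions for \emph{all} states in the span, not just that each $|\phi_i\rangle$ individually is $d$-uniform, and it is exactly your cross-term calculation (using the global minimum distance $h$ of the \emph{whole} array, not just of the blocks) that supplies the missing $\mathrm{tr}_{\overline S}(|\phi_i\rangle\langle\phi_j|)=0$ for $i\neq j$. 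Both routes arrive at the same place; yours is more self-contained, while the paper's leans on the IrOA/uniform-state machinery already set up in Lemmas~\ref{fei21} and~\ref{K-}.
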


\begin{proof} By Definition \ref{XLin2018}, the ${\rm OA}(N,n,s_1^{n_1}s_2^{n_2}$
$\cdots s_v^{n_v}, k)$ and $A_i\ (i=1,\ldots,K)$ are an ${\rm IrOA}(N,n,s_1^{n_1}s_2^{n_2}$
$\cdots s_v^{n_v},d)$ and an ${\rm IrOA}(\frac{N}{K},n,s_1^{n_1}s_2^{n_2}\cdots s_v^{n_v},d)$, respectively. From the link between IrOAs and uniform states in \cite{DGoyeneche2016} and $\{A_1,\ldots, A_K\}$, we can obtain $K$ $d$-uniform states $\{|\phi_1\rangle,\cdots, |\phi_K\rangle\}$, which can be used as an orthogonal basis. By Lemma \ref{fei21}, the complex subspace spanned by the orthogonal basis is an $((n, K, d+1))_{s_1^{n_1}s_2^{n_2}\cdots s_v^{n_v}}$ QECC.
\end{proof}

In fact, if there exists an ${\rm OA}(N, 2k, s, k)$ with $md=k+1$, from Lemma \ref{QECC}, there exists a $((2k,1,k+1))_s$ QECC which is a quantum MDS code according to Definition \ref{nqMDS}.

Sometimes, it is difficult to construct QECCs over mixed alphabets which achieve quantum Singleton bound, we will obtain near quantum MDS codes according to Definition \ref{nqMDS}. We have the following results.

\begin{theorem} \label{Nq}
  If there exists an ${\rm OA}(s^k,2k+1,s,k)$ for even $s$, then there exists an {\rm NQMDS} code $((2k+1,1,k+1))_{s^{2k}2^1}.$
\end{theorem}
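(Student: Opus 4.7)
The plan is to take the given symmetrical $\mathrm{OA}(s^k,2k+1,s,k)$, project one column modulo~$2$ to obtain a mixed orthogonal array over the $s^{2k}2^1$ alphabet, and then invoke Theorem~\ref{ORQU} with the trivial orthogonal partition. Before constructing anything, I verify that the target parameters satisfy Definition~\ref{nqMDS}: with $n=2k+1$ and $d=k+1$, we have $n-2(d-1)=1$, so the minimum in Definition~\ref{nqMDS} is taken over single coordinates and equals $\min(s,2)=2$, since $s\geq 2$ is even. The NQMDS requirement becomes $K=2-1=1$, matching the target dimension.

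For the construction, let $A$ be the given $\mathrm{OA}(s^k,2k+1,s,k)$; by Lemma~\ref{distance} it satisfies $md(A)=k+2$. I form $A'$ by keeping the first $2k$ columns of $A$ unchanged and replacing each entry of the last column by its residue modulo~$2$. A short counting argument, using that $s$ being even makes each residue have preimage size $s/2$ in $Z_s$, shows that $A'$ is an $\mathrm{OA}(s^k,2k+1,s^{2k}2^1,k)$: any $k$-tuple of columns avoiding the last inherits coverage from $A$, and for any $k$-tuple containing the last, each target pattern $(v_1,\ldots,v_{k-1},b)$ is the image of exactly $s/2$ strength-$k$ patterns of $A$, each occurring once, giving a uniform count of $s/2$.

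The delicate step is to show $md(A')\geq k+1$. Suppose two rows of $A'$ coincide on a set $S$ of columns with $|S|\geq k+1$. Since only one column of $A'$ is binary, $S$ contains at least $k$ of the $s$-level columns; restricting $A'$ to any such $k$ columns, strength~$k$ forces each tuple over $Z_s^k$ to appear exactly $s^k/s^k=1$ time, contradicting the existence of two rows sharing that tuple. Hence $|S|\leq k$ and $md(A')\geq 2k+1-k=k+1$.

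Finally, I apply Theorem~\ref{ORQU} to $A'$ with the trivial partition $\{A'\}$ of size $K=1$ and strength $k_0=k$: then $d=\min(k_0,md(A')-1)=\min(k,k)=k$, producing a $((2k+1,1,k+1))_{s^{2k}2^1}$ QECC, which by the first-paragraph parameter check is NQMDS. The main obstacle is the minimum-distance bound $md(A')\geq k+1$; naive projections can collapse distance arbitrarily, and verifying this bound is exactly where evenness of $s$ cooperates with the strength-$k$ structure of $A$.
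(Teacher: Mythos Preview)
Your proof is correct and follows the paper's approach: replace one $s$-level column by a $2$-level column (the paper invokes the expansive replacement Lemma~\ref{replacement}; your mod-$2$ projection is a concrete instance) and then apply Theorem~\ref{ORQU} with the trivial one-block partition. Your index-$1$ argument for $md(A')\geq k+1$ is valid but more elaborate than needed---since $md(A)=k+2$ by Lemma~\ref{distance} and only one coordinate is altered, the Hamming distance between any two rows can drop by at most one---and note that the evenness of $s$ is actually what makes the projected column balanced (i.e., ensures the OA property of $A'$), not what secures the distance bound as your closing remark suggests.
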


\begin{proof} From Lemma \ref{distance}, the minimal distance of ${\rm OA}(s^k,2k+1,s,k)$ is $k+2$. We can obtain an ${\rm OA}(s^k,2k+1,s^{2k}2^1,k)$ with $md=k+1$ after an $s$-level column of ${\rm OA}(s^k,2k+1,s,k)$ is replaced by a two-level column through expansive replacement method in Lemma \ref{replacement}. By Theorem \ref{ORQU}, we have a QECC $((2k+1,1,k+1))_{s^{2k}2^1}$ which is also an NQMDS code according to Definition \ref{nqMDS}.
\end{proof}

\begin{corollary} \label{Nq1} Suppose there exist two arrays ${\rm OA}(s_1^{k},2k+1,s_1,k)$ and ${\rm OA}(s_2^{k},2k+1,s_2,k)$. Let $s=s_1s_2$ be even. Then, there exists an {\rm NQMDS} code $((2k+1,1,k+1))_{s^{2k}2^1}.$
\end{corollary}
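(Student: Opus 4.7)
The plan is to realize the corollary as a direct composition of Lemma \ref{mul} (to fuse the two alphabets) with Theorem \ref{Nq} (to pass from the combined symmetrical OA to the desired NQMDS code).

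First, I would record the minimum distances of the two input arrays. By Lemma \ref{distance} applied to each of ${\rm OA}(s_i^{k},2k+1,s_i,k)$, $i=1,2$, the minimum distance is $(2k+1)-k+1=k+2$. Thus both inputs satisfy the hypothesis of Lemma \ref{mul} with a common $md$ value equal to $k+2$.

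Next, I would invoke Lemma \ref{mul} with the parameter identifications $N_1=s_1^{k}$, $N_2=s_2^{k}$, $n=2k+1$, $t=k$, and $h_1=h_2=k+2$. The lemma produces an
$${\rm OA}(s_1^{k}s_2^{k},\,2k+1,\,s_1s_2,\,k)={\rm OA}(s^{k},\,2k+1,\,s,\,k)$$
whose minimum distance is at least $\min\{k+2,k+2\}=k+2$; in fact Lemma \ref{distance} applied once more (to the combined array, which again has the form ${\rm OA}(s^{k},2k+1,s,k)$) forces equality, so $md=k+2$. Since by hypothesis $s=s_1s_2$ is even, this combined array is precisely the kind of object required by Theorem \ref{Nq}.

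Finally, I would apply Theorem \ref{Nq} to the combined array: an $s$-level column is replaced by a $2$-level column via expansive replacement (Lemma \ref{replacement}), after which the resulting ${\rm OA}(s^{k},2k+1,s^{2k}2^{1},k)$ with $md=k+1$ yields, through Theorem \ref{ORQU}, the NQMDS code $((2k+1,1,k+1))_{s^{2k}2^{1}}$. There is no serious obstacle here — the corollary is essentially a bookkeeping exercise chaining Lemmas \ref{distance} and \ref{mul} into Theorem \ref{Nq}; the only point that merits explicit verification is that the $md$ of the fused array is $k+2$ (so that after replacement it drops to exactly $k+1$), and this is handed to us by Lemma \ref{distance}.
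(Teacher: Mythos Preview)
Your proof is correct and follows exactly the paper's approach: apply Lemma \ref{mul} to fuse the two arrays into an ${\rm OA}(s^{k},2k+1,s,k)$, then invoke Theorem \ref{Nq}. Your extra care in tracking the minimum distance via Lemma \ref{distance} is fine (indeed more explicit than the paper), but note that Theorem \ref{Nq} only requires the existence of the ${\rm OA}(s^{k},2k+1,s,k)$ with $s$ even, so the $md$ computation is already absorbed there.
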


\begin{proof} From Lemma \ref{mul}, we have an ${\rm OA}(s^{k},2k+1,s,k)$. From Theorem \ref{Nq}, we have the corollary is true.
\end{proof}

\begin{corollary}\label{cor1} Suppose there exist $m\geq 3$ arrays ${\rm OA}(s_1^{k},2k+1,s_1,k)$, ${\rm OA}(s_2^{k},2k+1,s_2,k)$, \ldots, ${\rm OA}(s_m^{k},2k+1,$
$s_m,k)$. Let $s=s_1s_2\cdots s_m$ be even. Then, there exists an {\rm NQMDS} code $((2k+1,1,k+1))_{s^{2k}2^1}$.
\end{corollary}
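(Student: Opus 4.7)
The plan is to proceed by iterated application of Lemma \ref{mul} to fuse the $m$ given arrays into a single symmetrical ${\rm OA}(s^{k},2k+1,s,k)$, and then invoke Theorem \ref{Nq} directly. Conceptually, Corollary \ref{Nq1} handled the case $m=2$, and the present statement is just its natural extension via induction.

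First, I would record the minimum distance of each input. By Lemma \ref{distance}, every array ${\rm OA}(s_i^{k},2k+1,s_i,k)$ has $md=(2k+1)-k+1=k+2$. This uniform value of $md$ across all $m$ arrays is what will let the induction go through cleanly, because Lemma \ref{mul} outputs an OA whose $md$ is the minimum of the two input minimum distances.

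Next, I would combine the arrays inductively on $m$. The base case $m=2$ is Corollary \ref{Nq1}. For the inductive step, assume the first $m-1$ arrays have been fused into an ${\rm OA}((s_1s_2\cdots s_{m-1})^{k},2k+1,s_1s_2\cdots s_{m-1},k)$ with $md=k+2$. Applying Lemma \ref{mul} to this array and to ${\rm OA}(s_m^{k},2k+1,s_m,k)$, which share strength $k$, the same number of factors $n=2k+1$, and the same $md=k+2$, yields an ${\rm OA}(s^{k},2k+1,s,k)$ with $md=\min\{k+2,k+2\}=k+2$, where $s=s_1s_2\cdots s_m$. Note that the run size works out correctly since $(s_1\cdots s_{m-1})^{k}\cdot s_m^{k}=s^{k}$.

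Finally, since $s$ is even by hypothesis and the constructed array is exactly an ${\rm OA}(s^{k},2k+1,s,k)$, Theorem \ref{Nq} applies and produces the desired NQMDS code $((2k+1,1,k+1))_{s^{2k}2^1}$. There is no real obstacle in the argument: everything follows from assembling existing tools in the correct order. The only point requiring mild care is to check that the $md=k+2$ condition is preserved through each fusion step, since it is this value (together with the evenness of $s$) that makes the final invocation of Theorem \ref{Nq} legitimate.
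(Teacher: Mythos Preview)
Your proposal is correct and follows essentially the same approach as the paper: iterate Lemma \ref{mul} to fuse the $m$ arrays into a single ${\rm OA}(s^{k},2k+1,s,k)$ and then apply Theorem \ref{Nq}. The paper phrases this as ``repeatedly using Corollary \ref{Nq1}'', which amounts to the same thing; your explicit tracking of $md=k+2$ through each fusion is a harmless extra, since Theorem \ref{Nq} recovers this value on its own via Lemma \ref{distance} once the final ${\rm OA}(s^{k},2k+1,s,k)$ is in hand.
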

\begin{proof} Repeatedly using Corollary \ref{Nq1}, we have the corollary is true.
\end{proof}

\begin{theorem}\label{cor2} An {\rm NQMDS} code $((5,1,3))_{s^42^1}$ exists for even $s\geq 4$ and $s\neq 6$. 
\end{theorem}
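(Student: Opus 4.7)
The plan is to apply Theorem~\ref{Nq} with $k=2$, which reduces the statement to constructing an OA$(s^{2},5,s,2)$ for every admissible $s$; the extra $2$-level coordinate and the passage to the quantum code are then automatic from that theorem's proof via the expansive replacement method. An OA$(s^{2},5,s,2)$ is exactly a transversal design $\mathrm{TD}(5,s)$, equivalently a set of three mutually orthogonal Latin squares of order $s$, so the entire task is to exhibit such an array for every even $s\ge 4$ with $s\ne 6$.

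For the first step, I would dispose of the prime-power cases (in particular every $s=2^{a}$ with $a\ge 2$) by invoking Lemma~\ref{ICMq} with $m=2$ over $F_{s}$. The lemma produces the saturated array OA$(s^{2},s+1,s,2)$ whose columns are the nonzero $F_{s}$-linear combinations of two independent columns; since $s\ge 4$ forces $s+1\ge 5$, retaining any five of these columns already yields OA$(s^{2},5,s,2)$, and Theorem~\ref{Nq} then delivers the NQMDS code $((5,1,3))_{s^{4}2^{1}}$.

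Next, for even composite $s$ that admit a factorization $s=s_{1}s_{2}\cdots s_{m}$ into prime-power factors each $\ge 4$, the previous step supplies OA$(s_{i}^{2},5,s_{i},2)$ for every $i$, and Corollary~\ref{cor1} (which is the iterated Kronecker-product build via Lemma~\ref{mul}) splices them into an OA$(s^{2},5,s,2)$. Feeding this array into Theorem~\ref{Nq} settles all such $s$; examples covered here include $s=20=4\cdot 5$, $s=40=8\cdot 5$, $s=80=16\cdot 5$, and more generally $s=2^{a}\cdot q$ for $a\ge 2$ and $q$ any odd prime power $\ge 4$.

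The main obstacle is the residual list of even $s$ whose factorization is forced to include a factor of $2$ or $3$ and therefore cannot be handled by the multiplicative Step~2 (for instance $s=10,12,14,18,22,24,\dots$). For each such $s$ I would fall back on a direct construction of three MOLS of order $s$ drawn from standard design-theoretic recipes---transversal-design products, $\mathrm{PBD}$-closure, Wilson-type frame constructions, or the Bose--Shrikhande--Parker-type construction for the delicate value $s=10$---and then plug the resulting OA$(s^{2},5,s,2)$ into Theorem~\ref{Nq}. Checking uniformly that this case analysis leaves exactly $s=6$ as the sole excluded value among even $s\ge 4$, which is the classical Euler/Tarry gap, is where the real technical work of the proof sits; everything on either side of that case analysis is routine application of the lemmas already in the paper.
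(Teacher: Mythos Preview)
Your reduction to constructing $\mathrm{OA}(s^{2},5,s,2)$ and then invoking Theorem~\ref{Nq} is the paper's strategy for every $s\ne 10$, and the paper executes it more directly than your case split: it simply quotes from \cite{ASHedayat1999} that three MOLS of order $s$ exist whenever $s\ge 4$ is a prime power and whenever $s\ge 12$ is not a prime power, so a single citation disposes of everything you route through Steps~1--3. Your multiplicative detour via Corollary~\ref{cor1} is unnecessary but not wrong for those values.

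The genuine gap is $s=10$. You propose to produce three MOLS of order~$10$ via a ``Bose--Shrikhande--Parker-type construction,'' but Bose--Shrikhande--Parker yields only \emph{two} MOLS of order~$10$; whether three MOLS of order~$10$ exist---equivalently whether $\mathrm{OA}(100,5,10,2)$ exists---is a well-known open problem. So the pipeline ``build $\mathrm{OA}(s^{2},5,s,2)$, then apply Theorem~\ref{Nq}'' cannot be completed at $s=10$ with present knowledge. The paper sidesteps this by \emph{not} passing through the symmetric array: for $s=10$ it cites directly a mixed array $\mathrm{OA}(100,5,10^{4}2^{1},2)$ with $md=3$ from \cite{WarrenKufeld}, so the $2$-level column is already present and no replacement of a $10$-level column is needed. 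The key point your outline misses is that the mixed array required for the code can exist even when the symmetric $\mathrm{OA}(s^{2},5,s,2)$ is unknown; that is precisely what rescues $s=10$.
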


\begin{proof}
From \cite{ASHedayat1999}, we have the following conclusions: An ${\rm OA}(s^2,k,s,2)$ exists if and only if $k-2$ pairwise orthogonal Latin squares of order $s$ exist; There exist $s-1$ pairwise orthogonal Latin squares for prime power $s$; There exist more than 2 pairwise orthogonal Latin squares of order $s\geq 12$ which is not a prime power. When $s=10$, we have an ${\rm OA}(100,5,10^42^1,2)$ with $md=3$ in \cite{WarrenKufeld}. So ${\rm OA}(s^2,5,s^42^1,2)$ with $md=3$ for $s\geq4$ and $s\neq6$ can be obtained after an $s$-level column of ${\rm OA}(s^2,5,s,2)$ is replaced by a two-level column through expansive replacement method in Lemma \ref{replacement}. The proof is complete.
\end{proof}

\begin{example}
  Let $s=4$ and $k=2$. We have an ${\rm OA}(16,5,4,2)$. Then, we can obtain an {\rm NQMDS} code $((5,1,3))_{4^42^1}$ through Theorem \ref{Nq}.
\end{example}

Table \ref{(3)} list plenty of {\rm NQMDS} codes constructed by Theorem \ref{Nq}, \ref{cor2}, Corollary \ref{Nq1}, \ref{cor1}.

\renewcommand{\tablename}{Table}

 \begin{table}[htbp]
 \renewcommand\arraystretch{0.6}
  \caption{Near quantum MDS codes}
   \label{(3)}
   \tabcolsep=0.3cm
$$\begin{tabular}{c c|l}
\bottomrule   
\multicolumn{2}{c|}{Parameters} & \multicolumn{1}{c}{NQMDS code} \\ \hline
        $k$ & $s$ & $((2k+1,1,k+1))_{s^{2k}2^1}$ \\ \midrule        
        1 & 4 & $((3,1,2))_{4^22^1}$ \\
        1 & 6 & $((3,1,2))_{6^22^1}$ \\
        1 & 8 & $((3,1,2))_{8^22^1}$ \\
        1 & 10 & $((3,1,2))_{10^22^1}$ \\
        1 & 12 & $((3,1,2))_{12^22^1}$ \\
        1 & 14 & $((3,1,2))_{14^22^1}$ \\
        1 & 16 & $((3,1,2))_{16^22^1}$ \\\hline
        1 & $s=2t$ $(t\geq9)$ & $((3,1,2))_{s^22^1}$ \\ \hline
        2 & 4 & $((5,1,3))_{4^42^1}$ \\
        2 & 8 & $((5,1,3))_{8^42^1}$ \\
        2 & 10 & $((5,1,3))_{10^42^1}$ \\
        2 & 12 & $((5,1,3))_{12^42^1}$ \\
        2 & 14 & $((5,1,3))_{14^42^1}$ \\
        2 & 16 & $((5,1,3))_{16^42^1}$ \\
        2 & 18 & $((5,1,3))_{18^42^1}$ \\
        2 & 20 & $((5,1,3))_{20^42^1}$ \\\hline
        2 & $s=2t$ $(t\geq11)$ & $((5,1,3))_{s^42^1}$ \\ \hline
        3 & 8 & $((7,1,4))_{8^62^1}$ \\
        3 & 16 & $((7,1,4))_{16^62^1}$ \\
        3 & 32 & $((7,1,4))_{32^62^1}$ \\
        3 & 56 & $((7,1,4))_{56^62^1}$ \\
        3 & 64 & $((7,1,4))_{64^62^1}$ \\
        3 & 72 & $((7,1,4))_{72^62^1}$ \\
        3 & 88 & $((7,1,4))_{88^62^1}$ \\
        3 & 104 & $((7,1,4))_{104^62^1}$ \\
        3 & 112 & $((7,1,4))_{112^62^1}$ \\ \hline
        3 & \tabincell{c}{$s=8\times2^u\times p_1^{v_1}\times p_2^{v_2}\times\cdots\times p_m^{v_m}$ \\ ($p_i$ is a prime and $p_i^{v_i}\geq 7)$} & $((7,1,4))_{s^62^1}$ \\\hline 
        4 & 8 & $((9,1,5))_{8^82^1}$ \\
        4 & 16 & $((9,1,5))_{16^82^1}$ \\
        4 & 32 & $((9,1,5))_{32^82^1}$ \\
        4 & 64 & $((9,1,5))_{64^82^1}$ \\
        4 & 72 & $((9,1,5))_{72^82^1}$ \\
        4 & 88 & $((9,1,5))_{88^82^1}$ \\
        4 & 104 & $((9,1,5))_{104^82^1}$ \\
        4 & 128 & $((9,1,5))_{128^82^1}$ \\\hline 
        4 & \tabincell{c}{$s=8\times2^u\times p_1^{v_1}\times p_2^{v_2}\times\cdots\times p_m^{v_m}$ \\ ($p_i$ is a prime and $p_i^{v_i}\geq 9)$} & $((9,1,5))_{s^82^1}$ \\\hline 
        5 & 16 & $((11,1,6))_{16^{10}2^1}$ \\
        5 & 32 & $((11,1,6))_{32^{10}2^1}$ \\
        5 & 64 & $((11,1,6))_{64^{10}2^1}$ \\
        5 & 128 & $((11,1,6))_{128^{10}2^1}$ \\
        5 & 176 & $((11,1,6))_{176^{10}2^1}$ \\
        5 & 208 & $((11,1,6))_{208^{10}2^1}$ \\\hline 
        5 & \tabincell{c}{$s=16\times2^u\times p_1^{v_1}\times p_2^{v_2}\times\cdots\times p_m^{v_m}$ \\ ($p_i$ is a prime and $p_i^{v_i}\geq 11)$} & $((11,1,6))_{s^{10}2^1}$ \\ \hline
        
        $\cdots$ & $\cdots$ & $\cdots$ \\ \hline
      \end{tabular}$$
\end{table}

\section{Conclusion} \label{Conclu}

In this paper, by using OAs, we construct NMDS codes, $m$-MDS codes and NQMDS codes over two distinct alphabets. In the future, we will study construction of QMDS and NQMDS codes over more distinct alphabets from asymmetrical OAs.

\section*{Acknowledgments}

\noindent$\mathbf{Funding:}$ This research was funded by the National Natural Science Foundation of China Grant number 11971004.

\noindent$\mathbf{Conflicts\ of\ Interest:}$ The authors declare no conflict of interest.

\end{document}